%% file: main.tex
\documentclass[acmtog,nonacm,balance=false]{acmart}
\AtBeginDocument{%
  }

\setcopyright{acmlicensed}
\copyrightyear{2018}
\acmYear{2018}
\acmDOI{XXXXXXX.XXXXXXX}
\acmConference[Conference acronym 'XX]{Make sure to enter the correct
  conference title from your rights confirmation email}{June 03--05,
  2018}{Woodstock, NY}
\acmISBN{978-1-4503-XXXX-X/2018/06}
\let\oldnl\nl%
\newcommand{\nonl}{\renewcommand{\nl}{\let\nl\oldnl}}%
\usepackage{algorithm}
\usepackage[noend]{algpseudocode}
\newlength\savedwidth

\usepackage{enumitem} 
\usepackage{color, colortbl}

\usepackage{soul}
\usepackage{multirow}
\usepackage{bbding}

\definecolor{headercolor}{HTML}{b2df8a}
\definecolor{header2base}{HTML}{AE2573}
\colorlet{header2color}{header2base!30}  %

\input{newcommands}

\begin{document}

\title{SpUDD: \textbf{S}uper\textbf{p}ower Contouring of \textbf{U}nsigned \textbf{D}istance \textbf{D}ata}

\author{Ningna Wang}
\affiliation{%
  \institution{Columbia University}
  \state{New York}
  \country{USA}
}
\email{ningna.wang@columbia.edu}

\author{Xiana Carrera}
\affiliation{%
  \institution{Columbia University}
  \state{New York}
  \country{USA}
}
\email{x.carrera@columbia.edu}

\author{Christopher Batty}
\affiliation{%
  \institution{University of Waterloo}
  \city{Waterloo}
  \country{Canada}
}
\email{christopher.batty@uwaterloo.ca}

\author{Oded Stein}
\affiliation{%
  \institution{Technion}
  \city{Haifa}
  \country{Israel}
}

\affiliation{%
  \institution{University of Southern California}
  \city{Los Angeles}
  \state{CA}
  \country{USA}
}
\email{ostein@usc.edu}

\author{Silvia Sellán}
\affiliation{%
  \institution{Columbia University}
  \city{New York}
  \country{USA}
}
\email{silviasellan@cs.columbia.edu}

\renewcommand\shortauthors{Wang et. al}   %

\begin{teaserfigure}
\vspace{-0.25cm}
  \includegraphics[width=\textwidth]{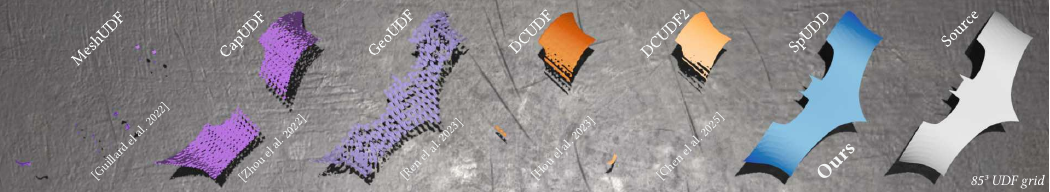}
  \vspace{-0.7cm}
  \caption{Prior methods focus on reconstructing meshes from either \emph{continuous} unsigned distance functions or a discrete set of \emph{signed}  distance samples, but fail if restricted to an input that is \emph{both} unsigned \emph{and} discrete (left). We theoretically study this more challenging reconstruction task and design the first method tailor-made to generate polygonal meshes from discrete unsigned distance data (right).}
  \label{fig:teaser}
\end{teaserfigure}

\begin{abstract}

Unsigned distance functions offer a powerful and flexible implicit surface representation that, unlike their signed counterparts, allow for surfaces that are open, non-orientable, or non-manifold.
We consider the problem of reconstructing arbitrary surfaces from a finite set of samples of unsigned distance data. 
Existing methods for mesh reconstruction from distance data rely on sign information, accurate gradients, a corresponding continuous distance function, or extensive data-dependent training. However, they fail when applied to input that is both discrete and unsigned.
Inspired by this challenge, we study the power diagram generated by the distance samples and propose a novel theoretical concept, the \emph{superpower contour}, which we prove converges to the true surface in the limit of sampling density.
We use this superpower contour as an initial surface proxy and design an algorithm that leverages it to produce a polygonal mesh  approximating the unknown true geometry.
Our method vastly outperforms other conceivable strategies for the discrete unsigned distance reconstruction task, and sets the stage for future work on this mathematically rich problem. 
\end{abstract}

\maketitle

\input{sections/1_intro.tex}

\input{sections/2_relatedwork.tex}
\input{sections/3_method.tex}
\input{sections/4_experiment.tex}

\input{sections/5_conclusion}

\begin{acks}
The Geometry and the City lab at Columbia University is supported by generous gifts from nTop, Adobe Inc., Dandy, Braid Technologies, and a Supported Research Project by Dreamsports. Christopher Batty is supported by the Natural Sciences and Engineering Research Council of Canada (Grant RGPIN-2021-02524). Oded Stein is generously supported by a gift from Adobe Inc. and the National Science Foundation (Award \#2335493).
\end{acks}

\bibliographystyle{ACM-Reference-Format}
\bibliography{reference}

\appendix

\begin{figure*}[h]
\centering
\includegraphics[width=\linewidth]{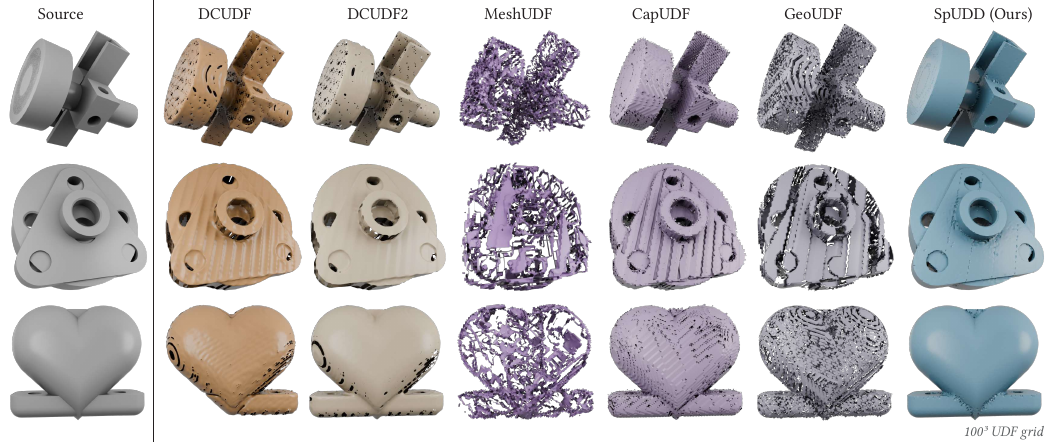}
\caption{Visual comparison with 5 state-of-the-art methods using $100^3$ UDF grid on the SALS dataset~\cite{ren2025sals}.}
\label{fig:dataset_SALS}
\end{figure*}

\section{Lemma in proof of \autoref{thm:convergence}}
\label{app:lemma}

\begin{lemma}\label{lem:union-convergence}
    In the conditions of \autoref{thm:convergence}, let $(\point_1,\distance_1),\dots,(\point_n,\distance_n)$ be independently drawn samples of distance to an unknown surface $\surface$, and let $\mathcal{U}_n$ be the complement of the union of all unsigned distance balls,
    \begin{equation}
        \mathcal{U}_n := \compactset \setminus \bigcup_{i=1}^n \ball(\point_i,\distance_i).
    \end{equation}
    Then,
    \begin{equation}
        \mathcal{U}_n \xrightarrow[n\to\infty]{d_H} \surface
    \end{equation}
    where $\xrightarrow[n\to\infty]{d_H}$ denotes convergence in two-sided Hausdorff distance.
\end{lemma}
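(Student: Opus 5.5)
We have to guess the setting of \autoref{thm:convergence}: samples $\point_i$ are i.i.d.\ from a distribution whose support is a compact set $\compactset$, with density bounded below, and $\distance_i = d(\point_i,\surface)$ is the exact unsigned distance to a compact surface $\surface\subset\compactset$. The balls $\ball(\point_i,\distance_i)$ are open. We prove the two one-sided Hausdorff bounds separately.

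\textbf{Step 1: $\surface\subseteq\mathcal{U}_n$ for every $n$.} If $x\in\surface$ and $x\in\ball(\point_i,\distance_i)$, then $\|x-\point_i\|<\distance_i=d(\point_i,\surface)\le\|x-\point_i\|$, which is impossible. So $\surface\cap\ball(\point_i,\distance_i)=\emptyset$ for all $i$. Hence $\sup_{s\in\surface}d(s,\mathcal{U}_n)=0$ deterministically.

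\textbf{Step 2: points of $\mathcal{U}_n$ are eventually close to $\surface$.} Fix $\varepsilon>0$ and let $K_\varepsilon:=\{x\in\compactset: d(x,\surface)\ge\varepsilon\}$, which is compact. We must show that almost surely, for all large $n$, $K_\varepsilon$ is covered by the balls. Cover $K_\varepsilon$ by finitely many balls $B(c_j,\varepsilon/4)$ with $c_j\in K_\varepsilon$, $j=1,\dots,M$. Take a sample $\point_i\in B(c_j,\varepsilon/4)\cap\compactset$. Then $\distance_i\ge d(c_j,\surface)-\varepsilon/4\ge 3\varepsilon/4$. Every $x\in B(c_j,\varepsilon/4)$ satisfies $\|x-\point_i\|<\varepsilon/2<\distance_i$, so $\ball(\point_i,\distance_i)\supseteq B(c_j,\varepsilon/4)$.

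It therefore suffices that each of the $M$ sets $B(c_j,\varepsilon/4)\cap\compactset$ eventually receives a sample. Each has positive probability mass $p_j>0$, which uses the support and density assumption. So the probability that some cell is empty after $n$ draws is at most $\sum_j(1-p_j)^n$, which is summable in $n$. By Borel--Cantelli, almost surely there is $N$ with $\mathcal{U}_n\subseteq\{d(\cdot,\surface)<\varepsilon\}$ for $n\ge N$. Note that $\mathcal{U}_n$ is nonincreasing in $n$, so once covered it stays covered; even the single-event probability bound tending to $0$ then gives a.s.\ convergence. Taking $\varepsilon=1/k$ for each $k$ and intersecting countably many full-measure events yields $\sup_{u\in\mathcal{U}_n}d(u,\surface)\to0$ almost surely.

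\textbf{Conclusion and main obstacle.} Combining Steps 1 and 2 gives $d_H(\mathcal{U}_n,\surface)\to0$ almost surely (and hence in probability).

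The main obstacle is Step 2's requirement that every small ball centered in $K_\varepsilon$ have positive sampling probability. This is exactly where the hypotheses of \autoref{thm:convergence} on the sampling distribution enter. If those hypotheses only give full support on $\compactset$, the argument is unchanged. If instead $\compactset$ has regions of zero mass, one must restrict to the support or invoke the theorem's stated assumption.

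A secondary subtlety is whether the balls are open or closed. If they are closed, Step 1 can fail: a surface point could lie exactly on a sphere boundary, so it is not covered by the open balls, and the inclusion $\surface\subseteq\mathcal{U}_n$ does not hold as written. In that case I would prove the inclusion for the open-ball version and note that the Hausdorff limit is unaffected, or adopt the open-ball convention explicitly.
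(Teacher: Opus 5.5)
Your proof is correct and follows the same two-step structure as the paper's. First, the deterministic inclusion $\surface\subseteq\mathcal{U}_n$ follows from the definition of distance, using open balls as the paper does. Second, a triangle inequality shows that a sample lying within a fraction of $\dist(x,\surface)$ of $x$ has a ball that swallows $x$.

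Where the paper argues pointwise (each fixed $x$ at distance $\delta$ eventually has a sample within $\delta/2$), your finite $\varepsilon/4$-cover of the compact set $K_\varepsilon$ and the Borel--Cantelli/monotonicity step make the bound uniform in $x$ and almost sure, which the paper leaves implicit. One small adjustment is needed. Under the paper's hypothesis $\mathrm{supp}\,\distr\supseteq\compactset$, you should use $\distr(B(c_j,\varepsilon/4))>0$, which holds since $c_j$ lies in the support, instead of intersecting with $\compactset$. This costs nothing, because the triangle-inequality step never needs $\point_i\in\compactset$.
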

\begin{proof}
    Let $x$ be in $\surface$, and assume $x\notin \mathcal{U}_n$. Then, there must exist a point $\point_i$ such that $x \in \ball(\point_i,\distance_i)$. But then that means $\|\point_i-x\| < \distance_i$, which since $x \in \surface$, contradicts the definition of distance. Therefore, $x\in \mathcal{U}_n$ and the one-sided Hausdorff distance from $\surface$ to $\mathcal{U}_n$ is always zero.

    For the other direction, let $x \in \mathcal{U}_n \subset \compactset$ and let
  $\delta = \dist(x, \surface) > 0$.  Since $\distr$'s support contains
  $\compactset$ and $x\in\compactset$, for large enough $n$ there exists a sample $\point_i$
  with $\|\point_i - x\| < \delta/2$.  By the triangle inequality,
  \begin{equation}
      \distance_i = \dist(\point_i, \surface) \geq \dist(x,\surface) -
      \|\point_i - x\| > \delta - \frac{\delta}{2} = \frac{\delta}{2}
      > \|\point_i - x\|,
  \end{equation}

  so $x \in \ball(\point_i,\distance_i)$, contradicting $x \in
  \mathcal{U}_n$.  Therefore no point of $\compactset$ at positive distance from
  $\surface$ can lie in $\mathcal{U}_n$ for large $n$, and the one-sided Hausdorff
  distance from $\mathcal{U}_n$ to $\surface$ goes to zero.
\end{proof}

\end{document}

%% file: newcommands.tex
\theoremstyle{definition}
\newtheorem{definition}{Definition}
\newtheorem{theorem}{Theorem}

\newtheorem{lemma}[theorem]{Lemma}

\definecolor{algshade}{gray}{0.5}
\newcommand{\graystate}[1]{\State \textcolor{algshade}{#1}}

\makeatletter
\algdef{SE}[FOR]{GrayFor}{GrayEndFor}[1]%
  {\textcolor{algshade}{\textbf{for} #1 \textbf{do}}}%
  {\textcolor{algshade}{\textbf{end for}}}

\algdef{SE}[IF]{GrayIf}{GrayEndIf}[1]%
  {\textcolor{algshade}{\textbf{if} #1 \textbf{then}}}%
  {\textcolor{algshade}{\textbf{end if}}}

\newcommand{\graylinefrom}{0}
\newcommand{\graylineto}{0}
\algrenewcommand\alglinenumber[1]{%
  \ifnum#1<\graylinefrom\relax\footnotesize#1:\else
    \ifnum#1>\graylineto\relax\footnotesize#1:\else
      \textcolor{algshade}{\footnotesize#1:}\fi\fi}
\makeatother

\newcommand{\cellvert}{\mathbf{x}}

\newcommand{\centroid}{\mathbf{c}}

\newcommand{\energy}{E}
\newcommand{\dist}{\text{dist}}

\newcommand{\distance}{d}

\newcommand{\mesh}{\mathcal{M}}
\newcommand{\bigO}{\mathcal{O}}
\newcommand{\point}{p}
\newcommand{\closestpoint}{b}
\newcommand{\sdistance}{s}
\newcommand{\hpoint}{q}
\newcommand{\hnormal}{\vec{n}}
\newcommand{\Hermitepoint}{\hpoint}
\newcommand{\Hermitenormal}{\hnormal}
\newcommand{\cell}{c}
\newcommand{\vertex}{v}
\newcommand{\edge}{e}
\newcommand{\distr}{\mathcal{D}}
\newcommand{\compactset}{\mathcal{K}}
\newcommand{\surface}{\Omega}
\newcommand{\R}{\mathbb{R}}
\newcommand{\pd}{\pi}
\newcommand{\pc}{\mathcal{P}}
\newcommand{\spc}{\mathcal{SP}}
\newcommand{\ball}{\mathcal{B}}
\newcommand{\pdc}{\mathcal{C}}
\newcommand{\pdf}{\mathcal{F}}

\newcommand{\hd}{\textit{HE}}
\newcommand{\cd}{\textit{CE}}
\newcommand{\ecd}{\textit{ECE}}

\newcommand{\best}[1]{\textbf{#1}}
\newcommand{\second}[1]{\underline{#1}}

\newcommand{\eg}{e.g., }

\usepackage{layouts}
\usepackage{calc}
\makeatletter
\newcommand{\layoutdetails}{%
\begin{tabular}{ll}
\texttt{\textbackslash{textwidth}} & \printinunitsof{in}\prntlen{\textwidth} \\
\texttt{\textbackslash{linewidth}} & \printinunitsof{in}\prntlen{\linewidth} \\
Main text font &  \f@size pt \f@family \\
\sffamily \small Caption text font &  \sffamily \small \f@size pt \f@family \\
\end{tabular}%
}
\makeatother

%% file: sections/1_intro.tex
\vspace{-0.15cm}
\section{Introduction}
\label{sec:intro}

\emph{Unsigned Distance Functions} (UDFs) implicitly represent shapes by measuring the absolute distance from any position in space to their surfaces.
Their expressiveness and flexibility have made them a powerful representation in 3D modeling: in contrast to \emph{signed} distance fields, they can represent shapes with arbitrary topology, including non-orientable and open surfaces as well as surfaces with non-manifold features.
From classical tools to 3D generative AI, many modeling workflows rely on first sampling or generating a discrete set of UDF data, before performing downstream tasks that require explicit surface representations (\eg texture mapping, physical simulation, and animation). Consequently, \emph{reconstructing meshes from only discrete UDF samples} is critically important.

Surprisingly, to the best of our knowledge, this specific task has not been considered in the geometry processing community before.
General UDF-to-mesh reconstruction methods assume additional input data, be it a continuous function returning the UDF value at arbitrary spatial positions \cite{zhang2023dualmeshUDF}, the UDF's gradients \cite{ren2023geoudf,zhou2022capudf,guillard2022meshudf}, or large datasets on which to train neural strategies \cite{chen2022neuraldc,stella2024nsdudf,maruani2024ponq}.
When restricted to only a finite set of discrete UDF samples (\eg by interpolating UDF values from a grid), these methods generally fail, often catastrophically (see \autoref{fig:teaser}).
Recently developed algorithms for meshing discrete \emph{signed} distance are promising, but make heavy use of the function's sign to spatially localize the surface \cite{mnmeg,dcsdd}, decide its topology \cite{rfts}, and estimate its normal direction \cite{mnmsiggraph,rfta}.
Naively extending them to unsigned inputs results in prominent, uncontrollable artifacts (see \autoref{fig:teaser}).

In this paper, we introduce the first method to reconstruct shapes of general topology from only discrete UDF data, without assuming any additional input information.
Our key insight is twofold: first, we note that the contouring proposed by \citet{dcsdd} is mostly agnostic to the distance data's sign, relying on it only in the beginning of the algorithm to decide which grid edges are crossed by the surface.
Secondly, we draw inspiration from the \emph{power contour} proposed by \citet{mnmeg}, which provides valuable information about the surface location but is only defined for \emph{signed} distance inputs. We show that it can be generalized to unsigned data by proposing a related larger structure, which we call the \emph{superpower contour}.
A theoretical study of this contour reveals several useful properties: most critically, it converges to the true unknown surface as sampling density increases, justifying its use as a proxy during reconstruction.
Combining these two insights, our algorithm relies on the superpower contour to identify the active edges of a background grid and the topology of the new mesh, and uses a dual optimization strategy to decide on the latter's geometry.

\begin{figure*}[h]
\vspace{-0.4cm}
\centering
\includegraphics{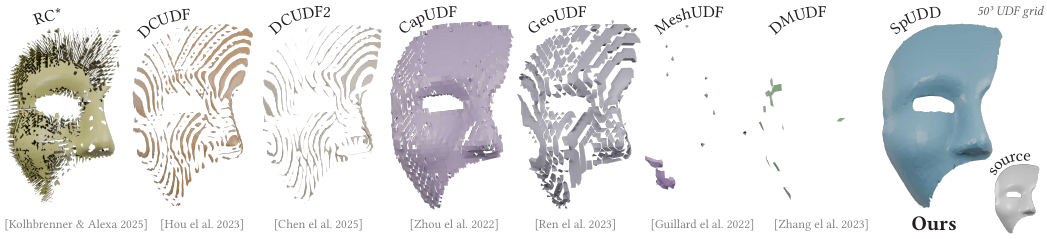}
\vspace{-0.4cm}
\caption{Our method (right) manages to reconstruct surfaces from coarse, \emph{discrete} unsigned distance data while avoiding the artifacts of prior work.}
\vspace{-0.3cm}
\label{fig:baselines_simple_shape}
\end{figure*}

Both computing our superpower contour and adapting the algorithm proposed by \citet{dcsdd} to accept unsigned distances as input require nuanced algorithmic decisions, which we justify and ablate experimentally in this text.
Critically, we reduce the inherent ambiguity in unsigned distance data without limiting the class of reconstructed surfaces: our algorithm can produce complex open, non-orientable, and non-manifold surfaces as easily as closed, oriented, manifold ones (see \autoref{fig:teaser}).

Through qualitative and quantitative tests on three different benchmarks, we demonstrate the failure of existing algorithms when applied to this challenging, discrete reconstruction task.
With these, we hope to introduce an unsolved, timely and mathematically rich research question to the geometry processing community.
At the same time, we evaluate our method's suitability for this task on three different large-scale datasets of diverse origins and geometric characteristics, highlighting its impact with prototypical 3D modeling applications and setting the stage for future work.

%% file: sections/2_relatedwork.tex
\section{Related Work}
\label{sec:related}

The key benefit of unsigned distance functions is their increased representational flexibility compared to their signed counterparts. This property has led to recent interest in continuous (neural) UDFs in the context of deep learning (e.g., \cite{chibane2020neural}), but their discrete counterparts have long been studied in several communities. In 2D image processing and analysis, Euclidean Distance Transforms (EDTs) are  algorithms to compute exact grid-based UDFs. \citet{fabbri20082d} list a host of downstream applications, including segmentation, morphological operations (such as erosion and offsetting), computation of various geometric objects (such as skeletonization, generalized Voronoi diagrams, and medial axes), pathfinding and collision avoidance for robotics, and shape matching. 

Naturally, many of these same operations are relevant to 3D problems in graphics, geometry processing, and medical imaging, among others; \citet{jones20063d} offers a survey of discrete distance fields (signed and unsigned) in graphics, and their applications. In applied mathematics, level set methods \cite{sethian1999level} for evolving surfaces more commonly employ discrete SDFs, but variants that involve discrete UDFs have been developed for multiphase flow (i.e., non-manifold interfaces) \cite{zheng2009simulation,saye2011voronoi} and for evolving open curves and surfaces \cite{leung2009grid}. The closely related (discrete) closest point transform has been used in the Closest Point Method for solving PDEs on general surfaces \cite{macdonald2008level,king2024closest}.
As a practical application, we will demonstrate in \autoref{sec:applications} the ability to perform union(-like) operations on discrete UDFs for implicit geometric modeling with open and non-manifold surfaces.

Motivated by the above, we consider the problem of reconstructing meshes from a discrete set of unsigned distance samples.
While we have not been able to find any prior work dedicated to this specific problem, it is closely related to two parallel, active lines of research: mesh reconstruction from \emph{continuous} UDFs, and mesh reconstruction from discrete \emph{signed} distance data.
We will review the state of the art for these two below, highlighting the inability of the first methods to handle discrete data and drawing the key insights from the second that lay the groundwork for our algorithm.

\begin{figure*}[t]
\centering
\vspace{-0.4cm}
\includegraphics{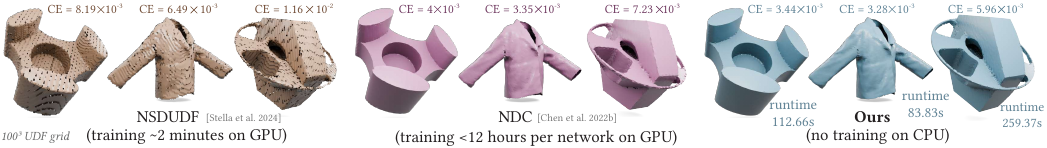}
\vspace{-0.3cm}
\caption{Our method quantitatively (using Chamfer error $\cd$) matches or outperforms even the best-performing neural methods in UDF reconstruction.}
\vspace{-0.3cm}
\label{fig:neural}
\end{figure*}

\subsection{Reconstruction from continuous UDFs}
\label{sec:related-udfs}

Broadly speaking, virtually all strategies for meshing continuous UDFs consist of two steps. First, a spatial subdivision is used to \emph{localize} the volumetric regions containing the surface (a far-from-trivial task in the case of \emph{unsigned} distances).
Then, variations of classical methods from the reconstruction of \emph{signed} distance functions are used to produce local surface patches that are joined together to create a mesh.
It is thus useful to categorize prior work in this area based on the specific method chosen for the second of these steps.

First, recent adaptations of classic \emph{marching cubes} assume knowledge of the UDF \emph{and its gradient} at the vertices of a regular grid \cite{ren2023geoudf,zhou2022capudf,guillard2022meshudf}. (In a similar approach, \citet{congote2010marching} queries extra edge midpoint samples to fit a parabola per edge.) Variations in gradient direction within each cell are then used to establish a local \emph{pseudo-sign} at each vertex, enabling reconstruction via the traditional Marching Cubes lookup tables \cite{marchingcube,wyvill1986data}.
Unfortunately, the reliance on the surface's gradient cannot be easily circumvented: as shown in \autoref{fig:baselines_simple_shape} (center, purple hues), estimating this gradient only from the grid's UDF samples (\eg with central finite differences) results in topological noise, holes, and artifact-laden (often fully empty) outputs. 

Second, \emph{dual contouring} methods (\eg \cite{zhang2023dualmeshUDF}) densely sample the continuous UDF to guide the construction of an adaptive spatial subdivision structure like an octree. In each leaf node, further sampling of the UDF is used to define a quadratic error function, whose minimization produces the mesh's local geometry in a way similar to the original Dual Contouring work by \citet{ju2002dual}.
While impressive at reconstructing sharp features, even from noisy inputs, such methods depend heavily on the ability to query the UDF at arbitrary locations. If forced to use only grid values (\eg by trilinearly interpolating from them), they produce either topological noise or completely empty outputs (see \autoref{fig:baselines_simple_shape}, green hue).

Third, \emph{double cover} methods \cite{hou2023dcudf,chen2025dcudf2} use Marching Cubes \cite{Lorensen1987,wyvill1986data} to mesh an epsilon level set of the UDF, yielding a double-layered envelope containing the desired surface.
This double layer is then evolved following a geometric flow, which relies on densely sampling the UDF function (and/or its gradient) at all arbitrary spatial positions occupied by the surface during its evolution.
A final stage cuts and stitches together sections of both layers to produce a single-layer output.
Both these flow evolution and cutting steps lean on the ability to densely query the UDF: if made to rely only on the input grid values, they show prominent topological aberrations correlated with grid alignment (see \autoref{fig:baselines_simple_shape}, center left, orange hues).

Finally, \emph{data-driven} methods utilize large datasets and computational resources to train neural networks that localize the surface and identify its local topology and geometry \cite{stella2024nsdudf,chen2022neuraldc,maruani2024ponq}.
Some of these \cite{stella2024nsdudf} then further rely on gradient information or dense UDF sampling to produce the reconstruction, making them unsuitable for our task (see \autoref{fig:neural}, left). Others, like \emph{Neural Dual Contouring} \cite{chen2022neuraldc}, are then able to leverage the data to produce accurate reconstructions from discrete UDF samples, especially for shapes similar to those in their training set (\autoref{fig:neural}, center).

From this overview of continuous UDF reconstruction, exemplified in \autoref{fig:baselines_simple_shape}, we conclude that no existing work is able to faithfully reconstruct explicit surfaces only from the little information contained in a discrete set of UDF samples, without resorting to additional information like gradients, on-demand sampling, or priors learned from large datasets.
It is this realization that motivates the need for our algorithm (see \autoref{fig:baselines_simple_shape}, right).
To develop it, we combine and adapt several insights from recent work on discrete \emph{signed} distance reconstruction, which we review below.

\vspace{-0.1cm}
\subsection{Reconstruction from discrete SDFs}

In contrast to the unsigned case, reconstructing surfaces from discrete \emph{signed} distance samples has received a large amount of attention \cite{deAraujo2015}.
Partly, this is because one can avoid the elaborate localization schemes described in \autoref{sec:related-udfs} and instead produce surface patches only on grid cells or edges along which the SDF changes sign. 
These methods can typically be categorized into \emph{spatial subdivision} (\eg Marching Cubes \cite{marchingcube,wyvill1986data}), \emph{surface tracking} \cite{hilton1996marching} and \emph{shrinkwrapping} \cite{stander1997}; a full review of this extensive research area exceeds the scope of this paper.

An important precedent related to our task is the work of \citet{mullen2010signing}, which uses UDFs as an intermediate step in point cloud reconstruction. They construct a discrete UDF from the point cloud, estimate a global \emph{signing} of the UDF data to produce an SDF, smooth the SDF to close the surface and eliminate small holes, and finally use a Delaunay refinement approach for contouring. Similarly, \citet{xu2020signed} use epsilon level sets to convert unsigned distance data from polygon soups into discrete signed distance data. However, these approaches assume that a sufficiently consistent global signing is possible, which is often challenging for complex open surfaces and impossible for general non-manifold and non-orientable ones.

Classically, SDF contouring methods treat SDFs as generic implicit functions, but recent efforts have begun explicitly exploiting their Euclidean distance property. Early work in this vein by \citet{rfts} poses the global discrete SDF reconstruction problem as that of finding a surface that is tangent to a set of spheres without intersecting them, relying on the sign of each SDF sample to impose the direction of tangency. Follow-up studies by \citet{rfta} and \citet{mnmsiggraph} use a similar characterization to compute a set of tangency points. The points' normal directions are deduced from the signs of the SDF data, resulting in an oriented point cloud that can be passed to off-the-shelf reconstruction methods \cite{Kazhdan2006,Kazhdan2013}.

A critical building block for our method is the realization, made recently by \citet{mnmeg}, that signed distance samples can be used to construct a spatial subdivision structure called the \emph{power diagram} \cite{aurenhammer1987power}, whose dual structure is known as the \emph{regular tetrahedralization}. The faces of this diagram can be filtered based on the signs of the SDF samples to obtain the \emph{power contour}; then, the dual tetrahedra associated to the power contour's vertices can be used as a background mesh on which to reconstruct the surface with Marching Tetrahedra \cite{doi1991efficient}.
Unfortunately, this power contour can only be defined for \emph{signed} distance inputs (see \autoref{fig:pc_spc_2d_3d}). In \autoref{sec:method-theory}, we will borrow some of the key insights from this work to construct a more general structure, which we call the \emph{superpower contour}, that can be defined for any general (unsigned) distance data. 

In parallel, very recent work by \citet{dcsdd} proposes a contouring strategy for signed distance data sampled on a regular grid. First, the SDF signs are used to decide on a set of \emph{active edges} and \emph{active cells}; then, an iterative optimization alternates between optimizing the local geometry within each cell for tangency and positioning the surface crossing point on each active edge.
Our work is born out of the realization that much of the optimization strategy proposed by \citet{dcsdd} is independent of the samples' signs, relying on them only to decide on the set of active cells and to initialize the surface crossing information in each active edge.
While this reliance makes the existing algorithm inapplicable to unsigned data, we make use of our novel \emph{superpower contour} to develop a sign-agnostic initialization strategy that we combine with their strategy to produce meshes from unsigned distance data.

%% file: sections/3_method.tex
\vspace{-0.1cm}
\section{The \emph{Superpower Contour}}
\label{sec:method-theory}

The input to our method is a finite set of $n$ distance samples $(\point_1,\distance_1),$ $\dots,$ $(\point_n,\distance_n)$, where $\distance_i \in\R$ is the distance from $\point_i\in\R^3$ to its closest point on an unknown surface $\surface\subset\R^3$. In this section, we develop an algorithm that outputs $\mesh$, a quad mesh that approximates $\surface$.

Inspired by prior work in \emph{continuous} UDF reconstruction (e.g., \cite{zhang2023dualmeshUDF}), we will produce $\mesh$ from our \emph{discrete} set of distance data in two stages: an initial global localization of the surface (introduced in this section) followed by a local optimization in each cell of a regular grid (introduced in \autoref{sec:method-practice}). 

Absent the inside-outside information provided by a signed distance, and without the ability to densely sample the UDF to search for its zero level set, the most challenging aspect of the discrete UDF reconstruction problem is that of localizing the surface to a specific region of space. Inspired by \citet{mnmeg}, we will treat $\{(\point_i,\distance_i)\}$ as a set of seeds, and make use of the \emph{power distance} from any spatial position $x\in\R^3$ to each of the seeds, traditionally defined as 
\begin{equation}\label{equ:pd}
 \pd_i(x) = \|x-\point_i\|^2 - \distance_i^2.
\end{equation}
Intuitively, $\pd_i$ can be interpreted as measuring the distance from $x$ to a sphere centered at $\point_i$ with radius $\distance_i$. Unlike a Euclidean distance, however, its rate of change is not constant (note the lack of a square root in \autoref{equ:pd}), and it will fall off faster the further $x$ is from $\point_i$.

\begin{figure}
\centering
\includegraphics[width=\linewidth]{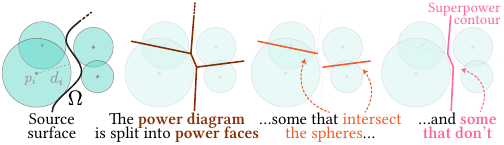}
\vspace{-0.6cm}
\caption{Intuitively, the superpower contour $\spc$ is built by filtering out the power faces that intersect any of the distance spheres. This produces a structure that serves as a proxy of the true unknown geometry $\surface$.}
\vspace{-0.3cm}
\label{fig:didactic-power-faces}
\end{figure}

\begin{figure*}
\centering
\includegraphics[width=\linewidth]{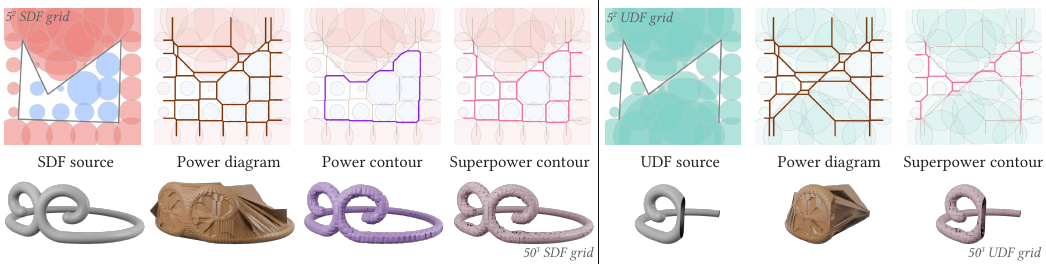}
\vspace{-0.7cm}
\caption{\citet{mnmeg} introduce the \emph{power diagram}, a structure that can be used as a proxy for the surface represented by an SDF. Our \emph{superpower contour} contains it, and is well defined for unsigned inputs (right).}
\label{fig:pc_spc_2d_3d}
\end{figure*}

In a similar way to how a Voronoi diagram partitions space based on the closest seed in Euclidean distance, the \emph{power diagram} consists of \emph{power cells} $\pdc_i$ that partition space based on the closest seed in power distance:
\begin{equation}\label{equ:pdcell}
 \pdc_i = \{x\in\R^3 : \pd_i(x) < \pd_j(x), \forall j\neq i\}.
\end{equation} 
Critically, the nonlinear falloff of the power distance makes the power cells convex, bounded by planar \emph{power faces} made up of points equi-(power-)distant to two seeds:
\begin{equation}\label{equ:pdface}
 \pdf_{i,j} = \{x\in\R^3 : \pd_i(x) = \pd_j(x) < \pd_k(x), \forall k\neq i,j\}.
\end{equation}
This property makes it possible to compute the power diagram robustly and efficiently even for a large number of seeds, using off-the-shelf libraries with exact predicates like CGAL~\cite{cgal}.
Often accompanied by its dual structure (the \emph{regular tetrahedralization}) the power diagram has been used in a myriad of geometry processing applications, from reconstruction from point clouds \cite{Ede93,Ede94} to medial axis extraction \cite{Ame01,Ame01b,2022MATFP,wang2024mattopo,wang2025matstruct}.

If the distances $\distance_i$ were signed, this sign could be used to label each power cell as \emph{inside} or \emph{outside} $\surface$. In this context, \citet{mnmeg} define the \emph{power contour} $\pc$ as the subset of power faces $\pdf_{i,j}$ that separate an inside cell from an outside one (see \autoref{fig:pc_spc_2d_3d}). They use the dual tetrahedra associated with the vertices of this contour, along with an isosurfacing strategy, to produce an approximation of the unknown surface.

In our \emph{unsigned} setting, we will instead define a new structure, which converges to the true surface $\surface$ and is guaranteed to contain the power contour introduced by \citet{mnmeg} in the specific case of closed surfaces with signed distance data.
The intuition behind it is simple, and it is shown in \autoref{fig:didactic-power-faces}: in the neighborhood of a surface, the spheres defined by the UDF data $(\point_i,\distance_i)$ will be locally separated into two non-overlapping regions of overlapping spheres. Between the two regions lie faces of the power diagram that do not intersect any of the spheres and that can serve as a good approximation of the unknown surface; away from the surface, a large number of power diagram faces intersect one of the regions and are likely irrelevant to the reconstruction.
We formalize this distinction as follows:

\begin{definition}[Superpower Contour]
Given a set of distance data $(\point_1,\distance_1),$ $\dots,$ $(\point_n,\distance_n)$, with associated power faces $\{\pdf_{i,j}\}$, the \emph{superpower contour} $\spc$ is the subset of power faces that do not intersect the interior of any of the seeds' spheres (i.e., the open balls $\ball (\point_k,\distance_k)$). In other words,
\begin{equation}\label{equ:spc}
 \spc = \bigcup_{i,j} \{\pdf_{i,j} : \pdf_{i,j} \cap \ball (\point_k,\distance_k) = \emptyset,\ \forall k\}.
\end{equation}
\end{definition}

The most critical property of the superpower contour is its ability to serve as a proxy for the unknown geometry $\surface$, enabling its use as a surface localization tool without having to resort to dense sampling, gradients, or data-driven priors. The convergence of the superpower contour to the surface is stated and proven in the following theorem, while \autoref{sec:pc-vs-spc} studies the relation between $\spc$ and the power contour introduced by \citet{mnmeg}.
These results serve as fundamental theoretical underpinnings of our method; however, a reader interested only in the algorithmic steps of our unsigned distance reconstruction may skip directly to \autoref{sec:method-practice}. 

\begin{theorem} 
\label{thm:convergence} Let $\surface$ be a two-dimensional manifold embedded in $\R^3$ and bounded by a compact set $\compactset \in \R^3$. Let $\distr$ be any distribution whose support entirely contains $\compactset$, and let $\spc_n$ be the superpower contour defined by samples $\point_1,\dots,\point_n$ independently drawn from $\distr$ along with their distances $\distance_i$ to $\surface$.
    Then, 
    \begin{equation}
        \spc_n \xrightarrow[n\to\infty]{d_H} \surface
    \end{equation}
    where $\xrightarrow[n\to\infty]{d_H}$ denotes convergence in two-sided Hausdorff distance.
\end{theorem}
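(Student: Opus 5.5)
We prove the two one-sided Hausdorff bounds separately. We use \autoref{lem:union-convergence} from the appendix, and we restrict attention to power faces inside $\compactset$, as in that lemma.

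\textbf{Direction $\spc_n \to \surface$.} This direction follows almost immediately from \autoref{lem:union-convergence}. By \autoref{equ:spc}, every point $x$ of a face in $\spc_n$ lies outside every open ball $\ball(\point_k,\distance_k)$. Hence $\spc_n\cap\compactset \subseteq \mathcal{U}_n$, and so
\begin{equation}
  \sup_{x\in\spc_n}\dist(x,\surface)\le \sup_{x\in\mathcal{U}_n}\dist(x,\surface)\to 0 .
\end{equation}
Some care is needed to make the lemma's pointwise argument uniform, since Hausdorff convergence needs a uniform statement. Fix $\varepsilon>0$ and cover the compact set $\{x\in\compactset:\dist(x,\surface)\ge\varepsilon\}$ by finitely many balls of radius $\varepsilon/4$. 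Each such ball has positive $\distr$-mass, so almost surely every one of them eventually contains a sample. Any sample $\point_i$ in the ball around a far point $x$ satisfies $\distance_i>\|\point_i-x\|$, so it covers $x$.

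\textbf{Direction $\surface \to \spc_n$.} This is the main obstacle. We must show that for every $y\in\surface$ there is a point of $\spc_n$ within $\varepsilon$ of $y$ once $n$ is large. The plan is as follows.

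First, for each $y\in\surface$, every ball $\ball(\point_k,\distance_k)$ misses $\surface$, because $\distance_k=\dist(\point_k,\surface)$. Hence $y$ lies in the closure of some power cell but in no open ball. We have $\pd_k(y)\ge 0$ for all $k$, with equality exactly when $\point_k$ has $y$ as a closest point.

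Second, since $\surface$ is a manifold, near $y$ it locally separates space, so samples exist on both local sides within distance $\varepsilon/4$ of $y$ for large $n$, by the same covering argument as above. Take the segment from a sample $\point_a$ on one side to a sample $\point_b$ on the other. This segment must cross the power diagram. Moving along it from $\point_a$, the minimizing seed changes, and at some point we pass through a power face $\pdf_{i,j}$.

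The key claim is that near $\surface$ one can pick such a crossing point $z$ with $\min_k \pd_k(z)\ge 0$, i.e.\ $z$ lies outside all balls. The idea is to take $z$ on the segment where the minimum power distance is maximal. This minimum equals $-\distance^2<0$ at the endpoints. At the surface crossing point it is $\ge 0$ because no ball contains a surface point. By continuity, the set where it is nonnegative is nonempty. Along this set the minimizer must switch from "side $a$" seeds to "side $b$" seeds, because a ball centered on one side cannot contain points on the other side within the local tubular neighborhood. So some point with nonnegative minimum power distance lies on a face.

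Third, we must upgrade "a point on a face outside all balls" to "a whole face $\pdf_{i,j}$ disjoint from all balls". This is the delicate part: faces may be large and clip some distant ball. We expect to handle it by showing that face diameters near $\surface$ shrink as $n\to\infty$. If a face bulges out, it would enter a ball whose seed lies far from $\surface$, which would give that seed a smaller power distance there and contradict the face definition. We would then argue that the face containing $z$ and separating a side-$a$ seed from a side-$b$ seed is contained in the local gap between the two families of balls. If this fails in general, the fallback is to use the weaker property that points of $\spc$ need only avoid ball interiors, and to argue via density of samples on both sides. Either way, $\dist(y,\spc_n)\le\varepsilon$ eventually, uniformly in $y$ by compactness of $\surface\cap\compactset$. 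Combining the two directions gives two-sided Hausdorff convergence.
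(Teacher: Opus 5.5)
Your first direction is the paper's argument: $\spc_n\cap\compactset\subseteq\mathcal{U}_n$ together with \autoref{lem:union-convergence}. Your finite-cover argument is a useful strengthening, since it turns the lemma's pointwise statement into a uniform one.

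The gap is in the second direction, at the step you flag as delicate. $\spc_n$ is a union of \emph{whole} faces, so a single point $z$ on some face with $\min_k\pd_k(z)\ge 0$ says nothing about membership in $\spc_n$. Your third step is only a plan. The shrinking-diameter route is unproven and unnecessary, and the heuristic behind it is circular. A point $x\in\pdf_{i,j}$ inside a far ball $\ball(\point_k,\distance_k)$ contradicts the face definition only when $x\notin\ball(\point_i,\distance_i)$, which is what needs proving. Otherwise $\pd_i(x)=\pd_j(x)<\pd_k(x)<0$ is perfectly consistent with $x\in\pdf_{i,j}$. Your fallback replaces $\spc_n$ by a different set (face points outside all balls), so it would prove a different statement.

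The missing idea is already in the proof of \autoref{thm:pc-in-spc}. On $\pdf_{i,j}$ the minimum of all power distances is $\pd_i=\pd_j$. Hence $\pdf_{i,j}$ meets some ball if and only if it meets $\ball(\point_i,\distance_i)\cap\ball(\point_j,\distance_j)$. So if these two balls are disjoint, the entire face lies in $\spc_n$, however large it is. What your second step must deliver is therefore not a point with nonnegative minimal power distance, but a face near $y$ whose two seeds have disjoint balls.

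That is the structure of the paper's argument. Fix a small $\ball(y,\epsilon_1)$ in which $\surface$ is a separating disk. Every ball meeting it is labelled left or right; this is well defined, since $\ball(\point_k,\distance_k)\cap\ball(y,\epsilon_1)$ is convex and misses $\surface$. One then takes a face near $y$ between a left-labelled and a right-labelled seed, and the non-overlap of left and right balls puts the whole face in $\spc_n$.

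Two cautions if you redo your second step this way. First, label balls by where they meet $\ball(y,\epsilon_1)$, not by the side on which their centers lie: the cells near $y$ are typically owned by large balls centered far from $y$, for which your center-side labelling is not well defined. Second, disjointness of a left and a right ball is automatic inside $\ball(y,\epsilon_1)$ but must be justified globally. It is immediate, e.g., when $\surface$ is closed and the two balls lie in different components of $\R^3\setminus\surface$.
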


\begin{proof}
Let $x$ be any point on the superpower contour $\spc_n$. By definition, $x$ is not contained in any $\ball(\point_i,\distance_i)$ for $i=1,\dots,n$; therefore, it is in the complement of the union of these balls, which we denote as 
\begin{equation}
    x \in \mathcal{U}_n := \compactset \setminus \bigcup_{i=1}^n \ball(\point_i,\distance_i).
\end{equation}
$\mathcal{U}_n$ can be interpreted as the region of space that is not covered by any of the spheres defined by the distance samples, which converges to $\surface$ in two-sided Hausdorff distance as $n\to\infty$ (see \autoref{lem:union-convergence} in the \autoref{app:lemma}). Since $x\in\mathcal{U}_n$, this implies that
\begin{equation}
    \mathrm{dist}(x,\surface) \leq d_H(\mathcal{U}_n,\surface) \xrightarrow[n\to\infty]{} 0.
\end{equation}
and thus the one-sided Hausdorff distance from $\spc_n$ to $\surface$ goes to zero as $n\to\infty$.

Conversely, let $y$ be any point on $\surface$. Since $\surface$ is a manifold, there is a sufficiently small $\epsilon_1$ around $y$ such that $\surface \cap \ball(y,\epsilon_1)$ is homeomorphic to a disk and separates $\ball(y,\epsilon_1)$ into two disjoint connected components. We refer to these as the \emph{left} and \emph{right} components, and their separation provides a locally consistent inside-outside segmentation, since a global segmentation is unavailable in the unsigned setting. Any sphere defined by the distance data $(\point_i,\distance_i)$ that intersects $\ball(y,\epsilon_1)$ must intersect either the left or the right component, but not both (otherwise, it would intersect $\surface$ and $\distance_i$ would violate the definition of distance).
Now, let $\epsilon_2$ be the radius of the smallest ball around $y$ that intersects at least one sphere in the left component and at least one sphere in the right component (as samples become denser, one can safely assume $\epsilon_2<\epsilon_1$). 

Consider then the form of the power diagram $\pd$ in this local neighborhood. Because there is at least one sphere in the left component and at least one sphere in the right component, there must be at least one power face $\pdf_{i,j}$ that intersects $\ball(y,\epsilon_2)$ and is equi-(power) distant to a seed in the left component and a seed in the right component. This power face cannot intersect any of the spheres, as they are all contained in either the left or the right component, and thus $\pdf_{i,j} \in \spc_n$. Therefore, there is a point $z\in\spc_n$ such that $\|y-z\| \leq \epsilon_2$, meaning that
\begin{equation}
    \mathrm{dist}(y,\spc_n) \leq \epsilon_2 \, .
\end{equation}
Since $\epsilon_2$ becomes smaller and smaller as the sampling density increases, this means that the one-sided Hausdorff distance from $\surface$ to $\spc_n$ also goes to zero as $n\to\infty$.
\end{proof}

\subsection{Relation between power and superpower contours}
\label{sec:pc-vs-spc}

Critically, $\spc$ is defined without the need to resort to signed distances or prior inside-outside segmentations of the distance data. This means that we can compute it not only when these signs are costly or unavailable, but also when they are mathematically non-existent, such as for open surfaces, non-orientable surfaces, or surfaces with complex non-manifold features.
Nonetheless, for the specific case of closed surfaces, $\spc$ contains the power contour $\pc$ defined by \citet{mnmeg}, as we formalize below.

\begin{theorem}
Given a set of \emph{signed} distance data $(\point_1,\sdistance_1),$ $\dots,$ $(\point_n,\sdistance_n)$ to a closed surface $\surface$, their power contour $\pc$ is a subset of the superpower contour $\spc$ defined by the \emph{unsigned} distance data $(\point_1,|\sdistance_1|),$ $\dots,$ $(\point_n,|\sdistance_n|)$.
\label{thm:pc-in-spc}
\end{theorem}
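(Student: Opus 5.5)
We fix a power face $\pdf_{i,j}\in\pc$ and show that it meets no open ball $\ball(\point_k,|\sdistance_k|)$. Since the seed weights are $|\sdistance_i|^2 = \sdistance_i^2$, the power diagram is the same for signed and unsigned data. Being in $\pc$ means the two cells adjacent to $\pdf_{i,j}$ have opposite labels; say $\sdistance_i<0<\sdistance_j$ (inside and outside), with zero-distance seeds handled separately below. It then suffices to check that every $x\in\pdf_{i,j}$ lies outside every $\ball(\point_k,|\sdistance_k|)$.

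The key step translates ball membership into a sign condition on power distance. Suppose $x\in\ball(\point_k,|\sdistance_k|)$. Then $\|x-\point_k\|<|\sdistance_k|$, so $\pd_k(x)<0$. For $x\in\pdf_{i,j}$ we have $\pd_i(x)=\pd_j(x)\le \pd_k(x)$ for all $k$; in particular $\pd_i(x)<0$. So it is enough to show that $\pd_i(x)\ge 0$ on $\pdf_{i,j}$, i.e., that points of the face lie outside (or on) both balls $\ball(\point_i,|\sdistance_i|)$ and $\ball(\point_j,|\sdistance_j|)$.

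This is where the opposite signs enter. Because $\surface$ is closed and the distances are exact, the open ball $\ball(\point_i,|\sdistance_i|)$ lies entirely inside $\surface$ and $\ball(\point_j,|\sdistance_j|)$ lies entirely outside. Otherwise the ball would contain a surface point, contradicting that $|\sdistance|$ is the distance to $\surface$. Hence the two balls are disjoint, and $\|\point_i-\point_j\|\ge |\sdistance_i|+|\sdistance_j|$.

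For two disjoint (or tangent) spheres, the radical plane lies between them. On that plane the common power $\pd_i(x)=\pd_j(x)$ is nonnegative. Indeed, a point with negative power with respect to both spheres would lie inside both balls, which is impossible because they are disjoint. The face $\pdf_{i,j}$ is a subset of this plane, so $\pd_i(x)\ge 0$ there, and no $k$ can satisfy $\pd_k(x)<0\le\pd_i(x)$. This contradicts minimality and proves $\pdf_{i,j}\in\spc$.

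The step I expect to need the most care is the degenerate cases. If a seed has $\sdistance=0$, its ball is empty and its sign label must follow \citet{mnmeg}'s convention; the argument still works because tangency ($\|\point_i-\point_j\|\ge|\sdistance_i|+|\sdistance_j|$) is all that is used. I also need to check that the definition of $\pdf_{i,j}$, with strict inequalities to other seeds, is compatible with the non-strict comparison used above. This holds because the argument only needs $\pd_i(x)\le\pd_k(x)$.
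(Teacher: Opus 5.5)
Your proof is correct and follows the paper's own argument. Opposite signs make the two open balls disjoint, so the common power value on $\pdf_{i,j}$ is nonnegative, and minimality of $\pd_i(x)$ then excludes membership in every ball $\ball(\point_k,|\sdistance_k|)$; you additionally prove the disjointness directly instead of citing \citet{rfts}, and you handle all $k$ (including $k=i,j$) at once rather than splitting into cases.
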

\begin{proof}
    Let $\pdf_{i,j}$ be a power face in $\pc$ corresponding to two seeds $\point_i$ and $\point_j$ with differently signed distances $\sdistance_i\sdistance_j<0$.
    Because their signs are different, the spheres $\partial\ball(\point_i,|\sdistance_i|)$ and $\partial\ball(\point_j,|\sdistance_j|)$ are disjoint \cite{rfts}.

    Let $x$ be any point on $\pdf_{i,j}$. Because $\pdf_{i,j}$ is a power face, we have that $\pd_i(x) = \pd_j(x)$. If the value of $\pd_i(x)$ (and thus also of $\pd_j(x)$) were negative, $x$ would be contained in both spheres, contradicting the fact that they are disjoint; therefore, $\pd_i(x) \ge 0$, and similarly $\pd_j(x) \ge 0$.
    This means that $x$ is not contained in either of the balls $\partial \ball(\point_i,|\sdistance_i|)$ and $\partial \ball(\point_j,|\sdistance_j|)$.
    Since, by definition of power face, $\pd_i(x) \leq \pd_k(x)$ and $\pd_j(x) \leq \pd_k(x)$ for all $k\neq i,j$, we also have that $x$ is not contained in any of the other $\partial \ball(\point_k,|\sdistance_k|)$ with $k\neq i,j$.
    Therefore, $\pdf_{i,j}$ does not intersect any of the spheres $\partial \ball(\point_k,|\sdistance_k|)$, and thus it is contained in $\spc$.
\end{proof}

\begin{figure}
\centering
\vspace{-0.4cm}
\includegraphics[width=\linewidth]{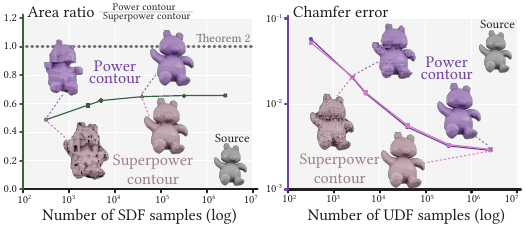}
\vspace{-0.7cm}
\caption{Like the power contour $\pc$, our superpower contour $\spc$ converges to the true surface, while maintaining a roughly constant face area ratio.}
\vspace{-0.3cm}
\label{fig:convergence_pc_spc}
\end{figure}

Importantly, together with \autoref{thm:convergence}, this result implies that the containment of $\pc$ in $\spc$ becomes tighter as the density of the distance samples increases and both $\pc$ and $\spc$ approach the true unknown surface $\surface$ (see \autoref{fig:convergence_pc_spc}).

Given this similarity between the behaviors of the superpower and power contours, one may wonder if simply swapping one for the other in the discrete SDF reconstruction algorithm proposed by \citet{mnmeg} would be sufficient to produce meshes from generic unsigned distance data.
Unfortunately, as shown in \autoref{fig:mm3_ours}, this is not the case: thanks to the SDF's sign, the power contour $\pc$ divides the vertices of every element in the regular tetrahedralization into two distinct sets (positive and negative ones), enabling Marching-Tetrahedra-like lookups to decide on the surface's local geometry.
This is no longer true of the superpower contour $\spc$, which may intersect the elements in the regular tetrahedralization in a myriad of different ways, including by splitting a tetrahedron's vertices into an odd number of connected components. This makes it impossible to assign even a locally consistent pseudo-sign in many of the tetrahedra; if one choses to simply ignore problematic tetrahedra, the resulting surfaces contain elaborate topological noise and visible aberrations (see \autoref{fig:mm3_ours}).

Instead, while the superpower contour can serve as a useful proxy for the true, unknown surface, constructing a complete discrete UDF reconstruction algorithm will require combining it with a tailor-made contouring-based meshing strategy. We detail this process in the following section.

\section{\emph{Superpower} Contouring of \emph{Unsigned} Distance Data}
\label{sec:method-practice}

As shown in \autoref{thm:convergence} and \autoref{fig:convergence_pc_spc}, the superpower contour $\spc$ can be used as a geometric proxy for the unknown surface $\surface$.
However, as a polygonal mesh, it is practically unusable for most downstream applications: it is unnecessarily large (millions of faces even at medium sampling resolutions like $100^3$ grids, see Figure~\ref{fig:pipeline}), and, moreover, the superpower contour contains abundant geometric and topological noise in the form of small non-manifold components (see \autoref{fig:pipeline}).
This motivates the need for a second step of our pipeline, in which the superpower contour is used in conjunction with a spatial subdivision strategy to output a reasonably sized quad mesh that complies with the input distance data $(\point_1,\distance_1),$ $\dots,$ $(\point_n,\distance_n)$.
In what follows, we assume that $\point_1,\dots,\point_n$ are the nodes of a regular grid bounding $\surface$. 

\subsection{Background: Dual Contouring of \emph{Signed} Distance Data}
\label{sec:signedDC}

\begin{figure}
\centering
\includegraphics[width=\linewidth]{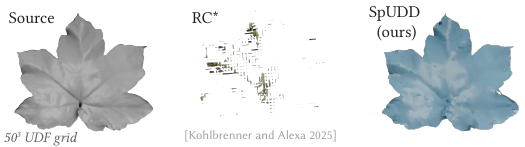}
\vspace{-0.6cm}
\caption{Merely combining our superpower contour with the Marching Tetrahedra strategy proposed by \citet{mnmeg} is not sufficient to extend their method to unsigned input (center), justifying the development of a tailor-made optimization strategy (right).}
\vspace{-0.3cm}
\label{fig:mm3_ours}
\end{figure}

We will follow a modified version of the \emph{signed} distance dual contouring strategy recently proposed by \citet{dcsdd}, itself inspired by the original work of \citet{schaefer2002dual}.
Given a regular grid of signed distance samples $(\point_1,\sdistance_1),\dots,(\point_n,\sdistance_n)$ and a list of \emph{active edges} $\edge_1, \dots, \edge_k$ (grid edges that intersect the unknown surface $\surface$), \citet{dcsdd} assume an initial set of estimated tuples $\{(\hpoint_j,\hnormal_j)\}_{i=1}^k$, i.e., an approximation of the points at which each active edge $\edge_j$ crosses $\surface$ and the surface normal at each such point.
These are known as \emph{Hermite points} and \emph{Hermite normals}.
From these, they  identify the \emph{active grid cells} (any cell containing an active edge) and assign a certain subset of the (signed) distance data to each cell.
Each active cell $\cell_i$ is endowed with a dual vertex $\vertex_i$, which is initialized to the average of the Hermite points belonging to $\cell_i$.

Then, \citet{dcsdd} pose the reconstruction problem as an energy minimization one, which is solved in a local-global fashion. In the local update, each active cell is considered individually and in parallel. For each active cell $\cell_i$ containing Hermite data $(\hpoint_1,\hnormal_1),...,(\hpoint_s,\hnormal_s)$ and assigned (signed) distance data $(\point_1,\sdistance_1),$ $\dots,$ $(\point_r,\sdistance_r)$, they begin by connecting the cell's dual vertex $\vertex_i$ to the Hermite points $\hpoint_1,\dots,\hpoint_s$ to build a \emph{local mesh} $\mesh_i(\vertex_i)$.
Then, they consider the error function
\begin{equation}\label{equ:energy}
\energy = \sum_{j=1}^r \left(\text{dist}(\point_j,\mesh_i(\vertex_i)) - |\sdistance_j|\right)^2 + \sum_{k=1}^s \left(\hnormal_k \cdot (\vertex_i - \hpoint_k)\right)^2\, ,
\end{equation}
in which the first term encourages the reconstruction to conform to the assigned (signed) distance data, while the second term encourages it to agree with the surface's normal direction at the Hermite points.
Using an iterative approach, they minimize this energy with respect to the position of the cell's dual vertex $\vertex_i$.
After this optimization is carried out for every active cell, the vertices $\vertex_i$ are joined together using each cell's active neighbors to form a quad mesh $\mesh$, which is in turn used to update the Hermite data.
This process is repeated through a number of local-global iterations, alternating between updating the reconstruction's vertices and improving the Hermite data, until a maximum number of iterations is reached. We refer readers interested in the details of this optimization to the original work by \citet{dcsdd}.

\subsection{Our algorithm: Contouring \emph{Unsigned} Distance Data}
\label{sec:algorithm}

The construction of our algorithm begins with the critical realization that the optimization described in the last two paragraphs of \autoref{sec:signedDC} is fundamentally agnostic to the sign in the distance data and the specific inward/outward orientation of the surface normals (note the absolute value around $\sdistance_j$ in \autoref{equ:energy} and the square after the dot product in the second term). 

\begin{figure}
\centering
\includegraphics[width=\linewidth]{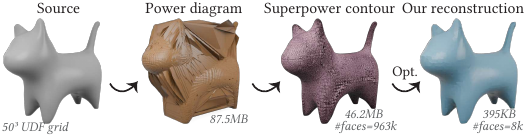}
\vspace{-0.6cm}
\caption{Our algorithm's pipeline: from the input unsigned distance data (left), we compute the power diagram (center left), which we filter to find the superpower contour (center right). It is in turn used to build an optimization problem, which we solve to obtain our final reconstruction (right).}
\vspace{-0.3cm}
\label{fig:pipeline}
\end{figure}

\begin{figure}
\centering
\includegraphics[width=\linewidth]{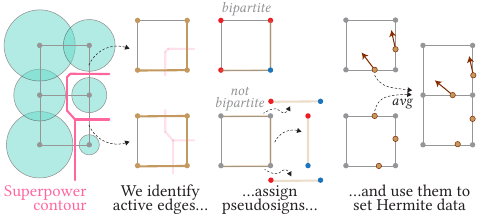}
\vspace{-0.6cm}
\caption{For each cell, we label as ``active'' any edges that intersect with the superpower contour (left). We use the graph defined by the active edges to assign per-cell (if bipartite) or per-edge (otherwise) pseudosigns, allowing us to initialize each edge's Hermite points and normals.}
\vspace{-0.3cm}
\label{fig:didactic-pseudosign}
\end{figure}

While promising, this critical realization does \emph{not} mean that the method of \citet{dcsdd} can be immediately applied to unsigned distance inputs. Indeed, it relies on changes in the distances' sign to identify the grid's active edges, which fundamentally determines the cells for which the optimization is carried out.
Additionally, for each active edge, it uses linear interpolation and finite differences on the grid nodes' signed distances to set the initial Hermite points and normals. This last step cannot be easily generalized to unsigned inputs, since the lack of sign causes linear interpolation to find no zero crossing and finite differencing yields poor quality estimates in the neighborhood of the non-differentiability caused by the surface's zero level set (see \autoref{fig:didactic-pseudosign}).

We now take advantage of the superpower contour $\spc$ introduced in \autoref{sec:method-theory}. In particular, its convergence properties (\autoref{thm:convergence}) will enable us to use $\spc$ as a proxy to localize the unknown surface $\surface$. Our algorithm thus begins by computing $\spc$ and testing it for intersections against all edges of a regular grid, labeling as active any edge with a non-empty intersection (see \autoref{fig:didactic-pseudosign}).

For each active cell, we will also use the superpower contour to assign to each node a local pseudo-sign that will enable us to build an initial Hermite point $\hpoint$ and normal $\hnormal$ on each of its active edges (see \autoref{fig:didactic-pseudosign})
Specifically, we interpret the cell's nodes as nodes in a graph, which are connected to one another if and only if they share an active edge (equivalently, if the cell's edge connecting them intersects the superpower contour $\spc$).

If this graph is bipartite (i.e., it contains no odd-length cycles, as in the top row of \autoref{fig:didactic-pseudosign}), the cell's nodes can be separated into two distinct, locally consistent sets such that every active edge crosses between them. In that case, the distance values in one of these sets can be made (temporarily) negative, and trilinear interpolation allows us to estimate the Hermite crossing point $\hpoint$ and its normal $\hnormal$ at every active edge.
If the graph is not bipartite (as in the bottom row of \autoref{fig:didactic-pseudosign}), all Hermite normals $\hnormal$ in the cell are set to zero, and pseudo-signs can be assigned on a per-active-edge basis to estimate $\hpoint$ using linear interpolation.

This process produces four different values of $\hpoint$ and $\hnormal$ at each active edge, corresponding to each of the cells containing it. As a final step of this initialization, we average these Hermite points and normals (flipping the latter as necessary to maintain a locally consistent orientation).
This leads to a configuration of Hermite data that one can set as the initialization to the optimization strategy detailed in \autoref{sec:signedDC}, resulting in a mesh $\mesh$ approximating $\surface$.
A full step-by-step pseudocode of our algorithm is available in \autoref{alg:pseudocode}.

\begin{figure}
\centering
\includegraphics[width=\linewidth]{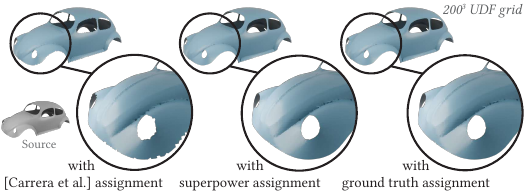}
\vspace{-0.6cm}
\caption{Our superpower contour's function as a proxy for the unknown geometry can be further exploited during optimization: we use it to assign distance data to different cells, producing boundary reconstruction similar to the ones one would obtain if using the ground truth geometry as proxy.}
\vspace{-0.3cm}
\label{fig:diff_optimize}
\end{figure}

\paragraph{Sphere assignment.}
To avoid an explosion in computational complexity, the first term in the energy in \autoref{equ:energy} loops only over a subset of the input distance data; namely, the data \emph{assigned} to the specific cell $\cell_i$. \citet{dcsdd} propose a simple strategy for this assignment: before each local optimization stage, one uses the previous iteration's reconstructed mesh $\mesh$ and finds $\closestpoint_1,\dots,\closestpoint_n$, the closest point on $\mesh$ to every one of the sample locations $\point_1,\dots,\point_n$. Then, each $\point_j$ is assigned to the grid cell containing its associated $\closestpoint_j$.
Because $\mesh$ is updated in every iteration, this data assignment must be repeated at every iteration of the optimization algorithm.

Experimentally, we find that this sphere assignment strategy often leads to erroneous reconstructions, especially near the boundaries of open surfaces.
Intuitively, a sub-optimal assignment can ``lock'' boundaries into zigzag configurations  in which the inner sections of the zigzag lack any assigned distance data (see \autoref{fig:diff_optimize}).
This is yet another algorithmic step in which we make use of our superpower contour $\spc$: since it serves as a reliable proxy of the unknown surface $\surface$, we can instead define $\closestpoint_j$ to be the closest point \emph{on $\spc$} to each sample location $\point_j$. 
Since $\spc$ is fixed, this choice removes the need to update the assignment during the optimization.
More critically, it avoids the ``locking'' caused by the traditional assignment and significantly improves our algorithm's reconstruction accuracy, especially for open shapes with boundaries.

\begin{figure}
\centering
\includegraphics[width=\linewidth]{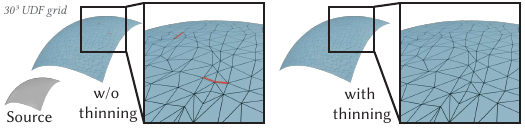}
\vspace{-0.6cm}
\caption{Our optional thinning step removes some of the small, undesirable non-manifold features common in dual contouring reconstruction methods.}
\vspace{-0.3cm}
\label{fig:ablation_thinning}
\end{figure}

\begin{figure}
\centering
\includegraphics[width=\linewidth]{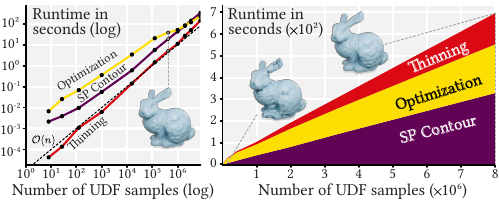}
\vspace{-0.6cm}
\caption{Our method's computational bottleneck consists of three steps, all of which scale log-linearly with the size of the input.}
\vspace{-0.3cm}
\label{fig:asymptotic_complexity}
\end{figure}

\paragraph{Thinning.}
Reconstructing a mesh from a discrete set of distance samples is a fundamentally underdetermined problem: there is an infinite number of surfaces that are consistent with the input distance data, and it is only through the use of priors that one can discriminate between them.
A common prior in discrete \emph{signed} distance reconstruction is topological: if the distances stored at a grid cell's corners exhibit a sign change, one assumes (e.g., through lookup tables \cite{marchingcube}) that $\surface$ is locally manifold and crosses the cell only once.

A similar prior is difficult to enforce directly on our unsigned formulation. Given the conservative nature of the superpower contour (see \autoref{thm:pc-in-spc}), the identified active edges will most often be a superset of the edges intersected by the true surface $\surface$, causing the final mesh to contain numerous undesirable non-manifold vertices and edges (see \autoref{fig:ablation_thinning}, highlighted).
Thus, instead, we will follow the lead of existing continuous UDF reconstruction methods \cite{wang2026voroudf} and rely on a post-processing \emph{thinning} of the final mesh that reduces topological noise.
In particular, we borrow the strategy proposed by \citet{wang2026voroudf}, in which the mesh is separated into its manifold connected components and sufficiently small boundary components (with cardinality less than $10$ in our experiments) are iteratively removed until none remain (see Algorithm \autoref{alg:pseudocode}).

\renewcommand{\graylinefrom}{34}
\renewcommand{\graylineto}{49}
{
\begin{algorithm}
\caption{\textbf{S}uper\textbf{p}ower Contouring of \textbf{U}nsigned \textbf{D}istance \textbf{D}ata. Lines in grey are adapted from \cite{dcsdd}.}\label{alg:pseudocode}
    \begin{algorithmic}[1]
    \Function{unsigned\_distance\_contouring}{$\point_1, \dots, \point_n,\distance_1, \dots, \distance_n$}
        \State $\{\pdf_{i,j}\} \gets$ compute power diagram of $\{(\point_i,\distance_i)\}$
        \State $\spc \gets \emptyset$
        \For{each power face $\pdf_{i,j}$}
            \If{$\pdf_{i,j} \cap \ball(\point_k,\distance_k) = \emptyset \ \forall k$}
                \State $\spc \gets \spc \cup \{\pdf_{i,j}\}$
            \EndIf
        \EndFor
        \For{each grid edge $\edge$}
            \If{$\edge \cap \spc \neq \emptyset$}
                \State Mark $\edge$ as active
            \EndIf
        \EndFor
        \For{each cell $\cell_i$ containing an active edge}
            \State $G_i \gets$ graph on cell corners with active edges
            \If{$G_i$ is bipartite}
                \State Assign pseudo-signs from bipartition of $G_i$
                \State Estimate $\Hermitepoint^0,\Hermitenormal^0$ on each active edge via 
                \Statex \hspace{\algorithmicindent}\hspace{\algorithmicindent}\hspace{\algorithmicindent}trilinear interpolation
            \Else
                \State $\Hermitenormal^0 \gets \mathbf{0}$ on each active edge
                \State Estimate $\Hermitepoint^0$ via linear interpolation
                \Statex \hspace{\algorithmicindent}\hspace{\algorithmicindent}\hspace{\algorithmicindent}along each active edge
            \EndIf
        \EndFor
        \For{each active edge $\edge$}
            \State $\Hermitepoint^0,\Hermitenormal^0 \gets$ average estimates over incident cells
        \EndFor
        \For{each active cell $\cell_i$}
            \State $\centroid_i \gets$ average of Hermite points in $\cell_i$
            \State $\cellvert_i^0 \gets \centroid_i$
        \EndFor
        \For{every data point $(\point_j,\distance_j)$}
            \State $\closestpoint_j \gets$ closest point to $\point_j$ on $\spc$
            \State Assign $(\point_j,\distance_j)$ to cell containing $\closestpoint_j$
        \EndFor
        \GrayFor{$k = 0,\dots,\text{max\_outer\_iter}$}
            \GrayFor{(parallel) every active cell $\cell_i$}
                \graystate{$\cellvert_i^{k,0} \gets \cellvert_i^{k}$}
                \GrayFor{$r=1,\dots,\text{max\_inner\_iters}$}
                    \graystate{$\cellvert_i^{k,r+1} \gets$ solve quadratic minimization}
                    \Statex \hspace{\algorithmicindent}\hspace{\algorithmicindent}\hspace{\algorithmicindent}\hspace{\algorithmicindent}\hspace{\algorithmicindent}\textcolor{algshade}{as proposed by \citet{dcsdd}}
                    \GrayIf{$\|\cellvert_i^{k,r+1} - \cellvert_i^{k,r}\| \leq \text{tol}$}
                        \State \textcolor{algshade}{\textbf{break}}
                    \GrayEndIf
                \GrayEndFor
                \graystate{$\cellvert_i^{k+1} \gets \cellvert_i^{k,r_{last}}$}
            \GrayEndFor
            \GrayFor{each active edge}
                \graystate{$\Hermitepoint^{k+1},\Hermitenormal^{k+1} \gets$ update via \citet{dcsdd}}
            \GrayEndFor
        \GrayEndFor
        \graystate{$\mesh \gets$ connect vertices around each active edge into quads}
        \While{true}
            \State Re-compute non-manifold components
            \While{boundary component with cardinality < $10$}
                \State Remove the component
            \EndWhile
            \State Break if only one component left
        \EndWhile
        \State \Return $\mesh$
    \EndFunction
    \end{algorithmic}
\end{algorithm}
}

%% file: sections/4_experiment.tex
\section{Experiments \& Results}
\label{sec:results}

We implemented our algorithm in C++ and Python, using CGAL's \cite{cgal} exact predicates to compute the sample data's power diagram and \emph{libigl} \cite{libigl} and \emph{Gpytoolbox} \cite{gpytoolbox} for common geometric subroutines.
We relied on \emph{Polyscope} \cite{polyscope} for visualizations during the algorithm's development, and rendered all our results with Blender and \emph{BlenderToolbox} \cite{blendertoolbox}.
All comparisons to prior work use their official open-source implementations unless explicitly specified otherwise; we sincerely thank all authors for releasing their code publicly.
Many of these implementations contain autodifferentiation components that require modern Nvidia GPUs; because of this, we conduct all timing experiments on the same Linux machine with an AMD EPYC 7763 64-Core CPU, eight NVIDIA RTX A6000 GPUs and 1 TB of memory.
Our method poses no such requirement on hardware, and all other results not containing timing information were generated on a MacBook Pro M3 with 36GB RAM.

\vspace{-0.1cm}
\subsection{Ablations and Experiments}
\label{sec:experiments}

Our algorithm's theoretical complexity is log-linear in the number of input unsigned distance samples, $n$.
In particular, as shown in \autoref{fig:asymptotic_complexity}, its runtime is split between the three major steps in our method.
First is the computation of the superpower contour $\spc$, which itself consists of calculating the power diagram $\pd$ (known to be $\bigO(n\log n)$ \cite{aurenhammer1987power}) and filtering the faces that intersect any of the spheres defined by the input data (also $\bigO(n\log n)$ with the help of an AABB tree).
While our theoretical results only bound the size of $\spc$ from below (see \autoref{thm:pc-in-spc} and \autoref{fig:pc_spc_2d_3d}), we do show that it converges to the power contour proposed by \citet{mnmeg} as the data size increases (see \autoref{fig:convergence_pc_spc}).
The log-linear theoretical cost of this entire algorithmic step is confirmed experimentally in \autoref{fig:asymptotic_complexity} (see purple line). 

Second is the local-global optimization described in \autoref{sec:algorithm}, with scaling  $\bigO(n k_1 k_2)$, that is, linear in the size of the input data ($n$) and the number of inner ($k_1$) and outer ($k_2$) optimization iterations.
While \citet{dcsdd} propose a large number of outer and inner iterations ($100$), this is partly due to its tendency for suboptimal assignments of spheres to active cells, which can only be resolved with repeated corrections.
As shown in \autoref{fig:didactic-pseudosign}, our method instead uses the superpower contour to assign distance data to active cells, significantly improving reconstruction accuracy at the cost of a single $\bigO(n\log n)$ (parallel) loop.
As an added benefit, this improved assignment strategy also causes our algorithm to converge significantly faster, requiring a much smaller number of inner and outer iterations (see \autoref{fig:outer-iters-ablation} and \autoref{fig:inner-iters-ablation}, and compare them to Figs. 6 and 16 in the paper by \citet{dcsdd}).
Unless stated otherwise, all our results use $k_1= 10$ and $k_2= 20$.

Finally, the last step of our algorithm is the thinning, in which one identifies the manifold components of the reconstruction and filters out those with small boundaries (see \autoref{fig:ablation_thinning}).
This is done with the help of a face-face adjacency matrix which is built and analyzed in log-linear time (see red line in \autoref{fig:asymptotic_complexity}), leading to our total algorithm's complexity $\bigO(n \log n)$.

We did not optimize our method beyond asymptotics; its wall-clock runtimes on a diverse set of examples can be seen in Figures \ref{fig:outer-iters-ablation}, \ref{fig:inner-iters-ablation}, \ref{fig:asymptotic_complexity} and \ref{fig:neural}.
In particular, \autoref{fig:runtime_comp} shows our method's runtime in the context of prior work in (continuous) UDF reconstruction.
As expected, it is generally slower than single-step spatial subdivision methods (top row, purple hues) and double-cover mesh evolution ones (bottom row, orange hues). 
Critically, our algorithm achieves these runtimes while avoiding the prominent artifacts and topological noise of prior methods when they are applied to discrete unsigned distance data, a fact that we explore further below.

\begin{figure}
\centering
\includegraphics{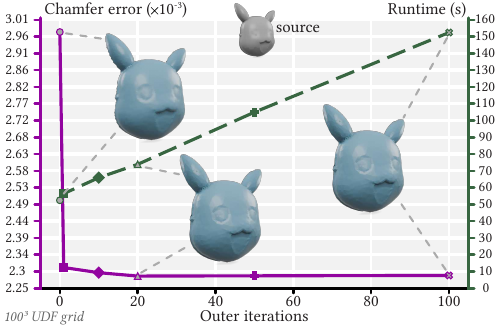}
\vspace{-0.3cm}
\caption{Our use of the superpower contour causes our optimization to converge with a much smaller number of iterations than the original work by \citet{dcsdd}.}
\vspace{-0.3cm}
\label{fig:outer-iters-ablation}
\end{figure}

\subsection{Comparisons to prior work}
\label{sec:comparisons}

To the best of our knowledge, no prior method considers the problem of reconstructing a mesh from a \emph{discrete} set of unsigned distance data.
Thus, we found no algorithm that directly accepts a lone set of UDF samples as input without assuming additional information is available.
Nonetheless, we did our best effort to adapt existing algorithms designed for similar or related tasks, comparing them to ours while making at every turn the experimental decisions most generous to them.

\paragraph{Comparison to \emph{continuous} UDF reconstruction}
Spatial subdivision methods like GeoUDF \cite{ren2023geoudf}, CapUDF \cite{zhou2022capudf} and MeshUDF \cite{guillard2022meshudf} assume knowledge of the UDF \emph{and its gradient} at each corner of a regular grid; thus, we adapted them to our task by estimating the gradients with central finite differences.
On the other hand, DualMeshUDF \cite{zhang2023dualmeshUDF} and double-cover methods like DCUDf \cite{hou2023dcudf} and DCUDF2 \cite{chen2025dcudf2} assume the ability to query both UDF and gradient at arbitrary positions: to test these, we use trilinear interpolation from the input UDF values and (estimated) gradients.

Experimentally, we found it impossible to reduce DualMeshUDF's reliance on querying the true UDF: in most of our tests with discrete distance data, its output was often an empty or almost-empty file, causing us to obviate it from all comparisons outside of \autoref{fig:baselines_simple_shape}.
Additionally, DCUDF and DCUDF2 end with a ``stitching'' or ``cut'' operation that joins the two surface layers into a single-layered output.
Experimentally, we found that this cut often removes large sections of the geometry; thus, for fairness, we report metrics with and without this step (see ``cut=T'' and ``cut=F'' in Tables \ref{tab:exp_tab_cad}, \ref{tab:exp_tab_deepfashion3d} and \ref{tab:sals}).

\begin{figure}
\centering
\includegraphics{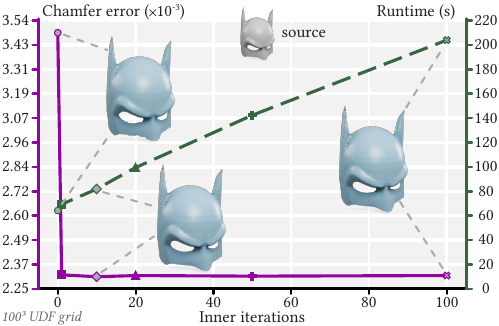}
\vspace{-0.3cm}
\caption{Very few inner optimization iterations are needed with our method, thanks to the quality proxy provided by the superpower contour}
\vspace{-0.3cm}
\label{fig:inner-iters-ablation}
\end{figure}

\begin{figure}
\centering
\includegraphics[width=\linewidth]{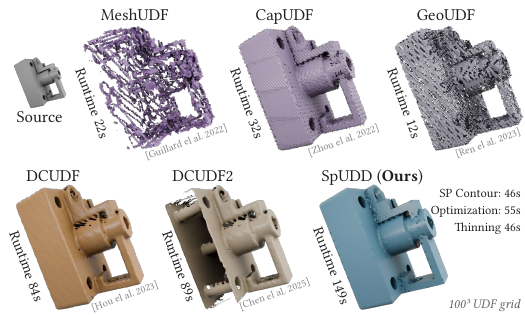}
\vspace{-0.6cm}
\caption{We avoid the artifacts of prior work at a modest runtime cost.}
\vspace{-0.3cm}
\label{fig:runtime_comp}
\end{figure}

Some of these methods are designed for general UDF inputs, while others make additional assumptions on the geometry (\eg., those based on marching cubes cannot output non-manifold surfaces).
Thus, in the same interest of fairness and generosity towards prior work, we separate our quantitative evaluation into three separate datasets: a commonly-used randomly selected subset of the ABC dataset, made up exclusively of closed shapes \cite{koch2019abc}; a dataset of open surfaces corresponding to garments \cite{zhu2020deepfashion}; and a recently introduced dataset of intersecting shapes with non-manifold features \cite{ren2025sals}. For each shape, we sample a $100^3$ grid of unsigned distance values, and measure Hausdorff, Chamfer and Edge-Chamfer distances between each method's geometry and that of the ground-truth.

Quantitative results for each dataset are presented in Tables \ref{tab:exp_tab_cad}, \ref{tab:exp_tab_deepfashion3d} and \ref{tab:sals}. Our algorithm, tailor-made for the discrete case, outperforms all others, often by several integer factors, across all metrics.
The reasons for this are shown qualitatively in Figures \ref{fig:teaser}, \ref{fig:baselines_simple_shape}, \ref{fig:runtime_comp}, \ref{fig:dataset_ABC}, \ref{fig:dataset_deepFashion3D} and \ref{fig:dataset_SALS}: when forced to rely only on discrete data, continuous UDF reconstruction methods produce visible artifacts related to grid alignment, and often miss entire geometric regions or produce almost no output.

\begin{table}
\caption{Quantitative comparison on the 100 ABC models~\cite{xu2024cwf} against seven state-of-the-art methods, using resolution $100^3$. We report the number of vertices (\#V) in the reconstructed mesh, the $L_2$ chamfer error ($\cd$), Hausdorff distance error ($\hd$) and chamfer edge error ($\ecd$) in the table. \best{Best} scores are shown in bold, the \second{second best} scores are underlined.
}
\vspace{-0.4cm}
\begin{center}
\resizebox{\columnwidth}{!}{%
\rowcolors{2}{gray!10}{white}
\begin{tabular}{c||c|c|c|c}
\rowcolor{header2color}
\hline
\toprule
 Method & $\cd$ x$10^{-3}$~$\downarrow$ & $\hd$ x$10^{-2}$~$\downarrow$  & $\ecd$ x$10^{-2}$~$\downarrow$ & \#Vertices\\
\midrule
MeshUDF
    &73.03 &25.09 &6.55 &11k  \\ 
CapUDF
    &6.34 &4.28 &3.97 &138k  \\ 
GeoUDF
    &11.91 &4.65 &7.60 &24k \\ 
    \midrule
DCUDF (cut=F)
    &12.98 &3.73 &8.46 &97k \\ 
DCUDF (cut=T)
    &21.02 &5.86 &10.61 &54k  \\ 
DCUDF2 (cut=F)
    &7.03 &4.53 &13.08 &103k  \\ 
DCUDF2 (cut=T)
    &8.77 &5.70 &15.34 &57k  \\ 
    \midrule
Ours w/o thinning
    &\second{3.12} &\second{1.44} &\second{1.38} &23k  \\ 
Ours
    &\best{3.06} &\best{1.39} &\best{1.27} &23k  \\ 
\bottomrule
\end{tabular}}
\end{center}
\label{tab:exp_tab_cad}
\vspace{-0.3cm}
\end{table}

\paragraph{Comparison to discrete \emph{SDF} reconstruction}
Most algorithms to reconstruct meshes from discrete signed distance  data with publicly available implementations cannot be easily adapted to generic unsigned inputs.
For example, \emph{Reach for the Spheres} \cite{rfts} assumes prior knowledge of the surface's topology, and does not support open or non-manifold surfaces (see \autoref{fig:rfts}).
\emph{Reach for the Arcs} \cite{rfta} and the follow-up by \citet{mnmsiggraph} rely on Poisson Surface Reconstruction \cite{Kazhdan2006} as a fundamental building block, which requires a point cloud whose normal orientations are determined by the data's sign.

At the same time, our algorithm builds on the insights made by \citet{mnmeg}, who define the \emph{power contour} $\pc$ of a set of signed distance samples. We use similar intuitions to construct the \emph{superpower contour} $\spc$, which is guaranteed to contain $\pc$ and, like it, converges to the true surface as sampling density increases (see \autoref{fig:convergence_pc_spc}).
Because we draw inspiration from this work, we thought it appropriate to consider how one may combine their algorithm with our proposed superpower contour $\spc$ to produce reconstructions from unsigned distance data.
For example, one could follow their algorithm to construct the distance data's regular tetrahedralization, and then test each tet edge for intersections against $\spc$ to mark them as \emph{active} as in our \autoref{sec:algorithm}.
Given this set of active edges, one could follow a similar logic to our Hermite data initialization in \autoref{sec:algorithm}, building a local graph at each tet and assigning a pseudo-sign to its vertices if it is bipartite.
A surface patch can be produced for each such tet using Marching Tetrahedra \cite{doi1991efficient}, while tets that are not bipartite may be left empty.

We refer to the above strategy as ``RC$^*$'', and show its performance in \autoref{fig:baselines_simple_shape} and \autoref{fig:mm3_ours} (left-most meshes in yellow hues).
Unfortunately, as both figures show, the preponderance of non-bipartite tetrahedra causes meshes produces by RC$^*$ to contain large holes and topological noise, motivating the development of the method described in \autoref{sec:method-practice} (see \autoref{fig:baselines_simple_shape} and \autoref{fig:mm3_ours}, in blue).

\begin{table}
\caption{Quantitative comparison on the first 100 DeepFashion3D models~\cite{williams2019deep} against seven state-of-the-art methods, using resolution $100^3$. We report number of vertices (\#V), $L_2$ chamfer error ($\cd$), Hausdorff  error ($\hd$) and chamfer edge error ($\ecd$) in the table. \best{Best} scores are shown in bold, the \second{second best} scores are underlined.
}
\vspace{-0.4cm}
\begin{center}
{
\setlength{\tabcolsep}{12pt}%
\resizebox{\columnwidth}{!}{%
\rowcolors{2}{gray!10}{white}
\begin{tabular}{c||c|c|c}
\rowcolor{header2color}
\hline
\toprule
 Method  & $\cd$ x$10^{-3}$~$\downarrow$ & $\hd$ x$10^{-2}$~$\downarrow$ &\#Vertices \\
\midrule
MeshUDF
    &56.91 &23.71 &12k  \\ 
CapUDF
    &6.28 &3.85 &128k  \\ 
GeoUDF
    &11.39 &5.29 & 24k \\ 
    \midrule
DCUDF (cut=F)
    &12.88 &4.23 &94k   \\ 
DCUDF (cut=T)
    &23.03 &10.89 &50k   \\ 
DCUDF2 (cut=F)
    &6.24 &4.87 &98k   \\ 
DCUDF2 (cut=T)
    &10.66 &9.34 &53k   \\ 
    \midrule
Ours w/o thinning
    &\second{3.68} &\best{1.90} &23k  \\ 
Ours
    &\best{3.65} &\second{2.37} &22k  \\ 
\bottomrule
\end{tabular}}
}
\end{center}
\label{tab:exp_tab_deepfashion3d}
\end{table}

\begin{figure}
\centering
\includegraphics{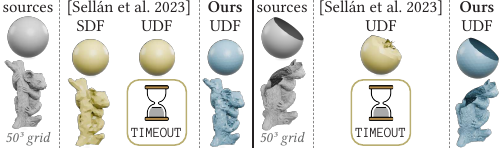}
\vspace{-0.7cm}
\caption{\emph{Reach for the Spheres} \cite{rfts} can reconstruct the simplest closed shapes from unsigned distance inputs, but fails for complex geometries and shapes with non-zero genus.}
\vspace{-0.3cm}
\label{fig:rfts}
\end{figure}

\begin{figure}
\centering
\includegraphics{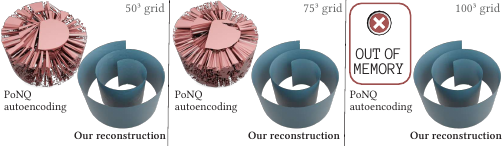}
\vspace{-0.7cm}
\caption{Executing the authors' provided implementation for open surfaces caused PoNQ \cite{maruani2024ponq} to fail for a simple open shape, which our method accurately reconstructs.}
\vspace{-0.3cm}
\label{fig:ponq_comp}
\end{figure}

\paragraph{Comparison to neural reconstruction methods}
A significant amount of recent work has proposed resolving the ambiguities inherent to unsigned distance data through \emph{data-driven, neural} algorithms that learn priors from large datasets of shapes.
Some of these are nonetheless unsuited for discrete UDF inputs: for example, NSUDF \cite{stella2024nsdudf} assumes the ability to sample the continuous UDF and produces clear grid artifacts if forced to interpolate values from a grid (see \autoref{fig:neural}, left).
Similarly, PoNQ \cite{maruani2024ponq} is only able to reconstruct open surfaces by closing them, reconstructing them as solids, and adding holes in post-processing, a strategy that is invalid for most UDF inputs, even without discretization (see \autoref{fig:ponq_comp}, produced using the authors' official implementation script for open surfaces).

Of all data-driven methods that we tested, only \emph{Neural Dual Contouring} (NDC) \cite{Chen2022} was able to reconstruct arbitrary surfaces from discrete unsigned distance inputs.
As shown in \autoref{fig:neural}, our method matches or slightly outperforms the author-provided pretrained NDC on representative examples from all three datasets.
Critically, while \citet{rfts} showed one can outperform NDC by increasing the resolution of the output, our method matches NDC's data-driven prior \emph{at the same resolution}, since both rely on a similar contouring strategy which produces a single vertex per active cell.

Coupled with NDC's impressive generalization ability, this finding may prove relevant in the development of future neural approaches. Unlike in other realms, it is possible that data-driven strategies for SDF and UDF reconstruction are being used not to disambiguate between solutions in an underdetermined setting but to encode the solution to a complex optimization problem. 

\begin{table}
\vspace{-0.4cm}
\caption{Quantitative comparison on 33 self-intersecting models~\cite{ren2025sals} against seven state-of-the-art methods, using a same resolutions $100^3$. We report number of vertices, $L_2$ chamfer error ($\cd$), Hausdorff distance error ($\hd$) and chamfer edge error ($\ecd$) in the table. \best{Best} scores are shown in bold, the \second{second best} scores are underlined.}
\vspace{-0.4cm}
\begin{center}
\resizebox{\columnwidth}{!}{%
\rowcolors{2}{gray!10}{white}
\begin{tabular}{c||c|c|c|c}
\rowcolor{header2color}
\hline
\toprule
 Method & $\cd$ x$10^{-3}$~$\downarrow$ & $\hd$ x$10^{-2}$~$\downarrow$  & $\ecd$ x$10^{-2}$~$\downarrow$ &\#Vertices \\
\midrule
MeshUDF
    &44.91 &19.71 &3.51 &26k   \\ 
CapUDF
    &9.50 &4.82 &2.89 &209k  \\ 
GeoUDF
    &12.99 &5.61 &3.81  &39k \\ 
    \midrule
DCUDF (cut=F)
    &14.49 &4.95 &5.13 &140k  \\ 
DCUDF2 (cut=F)
    &9.34 &5.77 &5.46 &145k  \\ 
    \midrule
Ours w/o thinning 
    &\second{6.02} &\best{1.96} & \second{2.16} &41k \\
Ours
    &\best{5.97} &\second{2.18} &\best{2.14} &39k  \\ 
\bottomrule
\end{tabular}}
\end{center}
\label{tab:sals}
\vspace{-0.3cm}
\end{table}

\vspace{-0.1cm}
\subsection{Applications}
\label{sec:applications}
Discrete implicit representations have seen a large amount of use in applications from physical simulation to 3D generative and artist-driven modeling.
However, while algorithms like Marching Cubes \cite{marchingcube,wyvill1986data} have enabled one to extract meshes from discrete \emph{signed} representations, no similar strategy existed for discrete \emph{unsigned} distance data, limiting the applications of this data format to closed shapes for which an outside/inside segmentation is well defined.
Our method thus broadens the space of shapes one can utilize in these applications; as shown in \autoref{fig:close_open_nonmanifold}, it allows one to model closed shapes as easily as open and non-manifold ones containing intersections.

Most implicit modeling frameworks involve the construction of elaborate CSG trees in which shapes are combined with each other through Boolean operations like unions.
Existing software used by both artists and engineers often relies on discrete representations to store, combine and reconstruct shapes efficiently.
As shown in Figure~\ref{fig:modeling_moon}, our method can be used to extend discrete CSG trees to open, non-manifold and intersecting surfaces.

%% file: sections/5_conclusion.tex
\vspace{-0.1cm}
\section{Limitations and Conclusion}
\label{sec:conclusion}
We have proposed the first algorithm tailor-made to reconstruct polygonal meshes from discrete unsigned distance samples. To construct it, we introduced a novel structure, called the \emph{superpower contour}, and mathematically proved its relation to prior work and convergence to the true, unknown surface.
Through exhaustive quantitative and qualitative tests, we have shown the superiority of our method against most conceivable alternatives.

Our algorithm's implementation achieves this performance in part by assuming that the distance data is stored at nodes of a regular grid containing the surface. We are hopeful for future work that uses our superpower contour to guide the refinement of an adaptive spatial subdivision that could be used to efficiently reconstruct meshes from distances sampled at any spatial positions.

\begin{figure}
\vspace{-0.4cm}
\centering
\includegraphics[width=\linewidth]{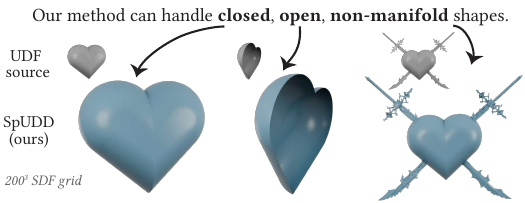}
\vspace{-0.7cm}
\caption{By broadening the shapes that can be reconstructed from discrete distance data, we enable a new set of geometric modeling applications.}
\vspace{-0.3cm}
\label{fig:close_open_nonmanifold}
\end{figure}

\begin{figure}
\centering
\includegraphics[width=\linewidth]{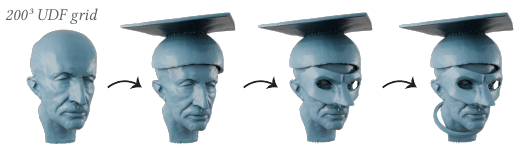}
\vspace{-0.6cm}
\vspace{0.2cm}
\includegraphics[width=\linewidth]{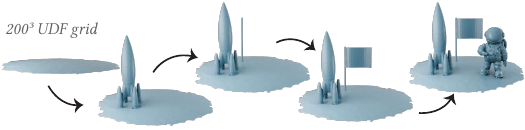}
\vspace{-0.6cm}
\caption{\emph{A giant leap} in artists' ability to iteratively model complex scenes involving open and non-manifold structures.}
\vspace{-0.3cm}
\label{fig:modeling_moon}
\end{figure}

\begin{figure*}
\centering
\includegraphics[width=\linewidth]{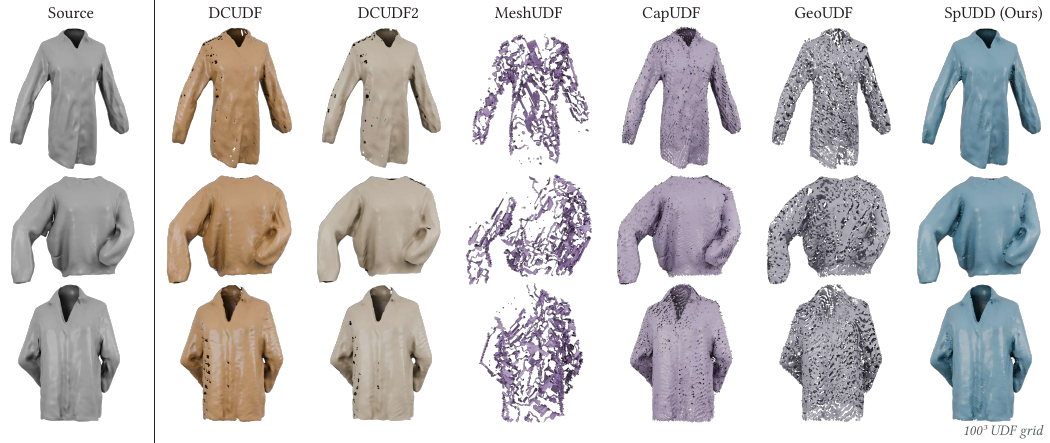}
\vspace{-0.7cm}
\caption{Visual comparison with 5 state-of-the-art methods using $100^3$ UDF grid on the DeepFashion3D dataset~\cite{zhu2020deepfashion}.}
\label{fig:dataset_deepFashion3D}
\end{figure*}

To prove the theoretical results related to the superpower contour and its limiting behavior, we assumed that the input data $\distance_i$ are the exact distances from each sample point $\point_i$ to the unknown surface $\surface$.
Our results will not hold if $\distance_i$ are instead assumed to belong to a wider class of implicit representations, like generic unit gradient fields or 1-Lipschitz quasi-distances.
For similar reasons, our method is only resilient to small amounts of noise in the distance data, quickly exhibiting artifacts if its magnitude increases  (see \autoref{fig:noise}).
Generalizing our algorithm beyond exact distance data will likely require fundamental modifications, from constructing a noise-aware equivalent to the power diagram $\pd$ to including noise tolerances in the local-global optimization described in \autoref{sec:algorithm}.

We hope to use this paper to introduce the specific problem of discrete unsigned distance reconstruction to the geometry processing community.
The main challenge lies in its underdetermined nature, as an infinite number of arbitrarily complex surfaces satisfy any coarse set of discrete distance samples.
This often leads our algorithm to produce surfaces that comply with the input distance data but contain dents near sharp and non-manifold features, as well as small topological features (see, e.g., the right-most column in \autoref{fig:dataset_SALS}).
Future work may consider novel ways of imposing common reconstruction priors like topological simplicity or smoothness in the absence of the inside-outside segmentation provided by signed data or the ability to obtain additional distance queries.

Due to their simplicity, efficiency, and flexibility, discrete SDFs have been used pervasively in applications from level set methods to implicit modeling.
However, while comparably easier to manipulate and able to represent a wider space of geometries, the use of discrete UDFs has been limited by the inability to robustly recover an explicit representation of the surface.
We hope that our work can serve to alleviate this limitation, and thus open the door for others to explore the myriad applications of discrete unsigned distance data in geometry processing and beyond.

\begin{figure}
\centering
\includegraphics{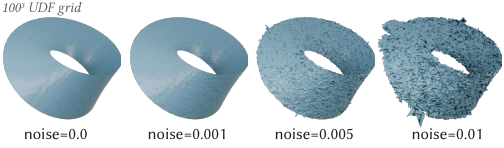}
\vspace{-0.6cm}
\caption{Our algorithm is moderately robust to small amounts of noise, but becomes unreliable as its magnitude increases.}
\vspace{-0.3cm}
\label{fig:noise}
\end{figure}

\begin{figure*}
\centering
\includegraphics[width=\linewidth]{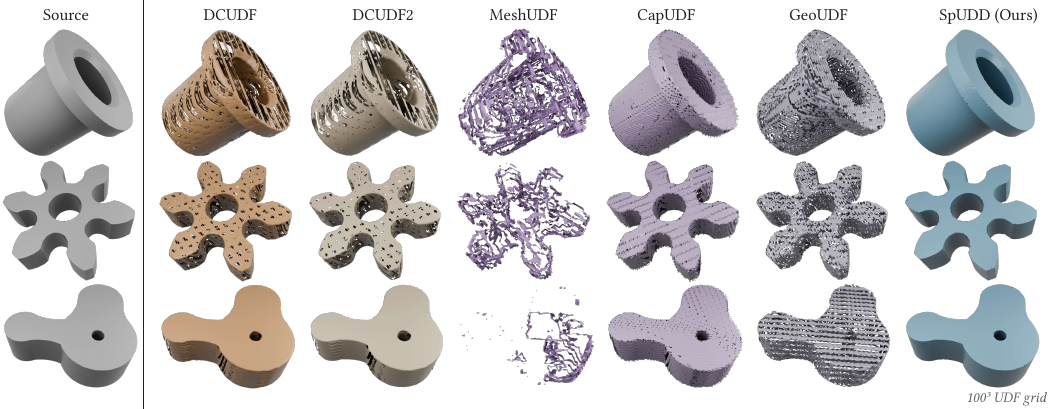}
\caption{Visual comparison with 5 state-of-the-art methods using $100^3$ UDF grid on the ABC dataset~\cite{koch2019abc}.}
\vspace{-0.3cm}
\label{fig:dataset_ABC}
\end{figure*}